\definecolor{fernando}{rgb}{0, 0, 0.7}
\definecolor{taya}{rgb}{0.3, 0.7, 0.3}
\newcommand{\AC}{\mathcal{A}}
\newcommand{\HH}{\mathcal{H}}
\newcommand{\RR}{\mathbb{R}}
\newcommand{\ra}{\mathrm{a}}
\newcommand{\rd}{\mathrm{d}}
\newcommand{\rh}{\mathrm{h}}
\newcommand{\rn}{\mathrm{n}}
\newcommand{\rs}{\mathrm{s}}
\newcommand{\ru}{\mathrm{u}}
\newcommand{\opt}{\mathrm{opt}}
\newcommand{\fix}{\mathrm{fix}}
\DeclareMathOperator{\sgn}{Sgn}
\newtheorem{proposition}{Proposition}
\newtheorem{corollary}{Corollary}
\theoremstyle{definition}
\newtheorem{assumption}{Assumption}
\theoremstyle{remark}
\title{
Formalizing Neuromorphic Control Systems\\ \Large A General Proposal and A Rhythmic Case Study
}
\author[1]{Taisia Medvedeva}
\author[2]{Alessio Franci}
\author[1]{Fernando Castaños}
\affil[1]{Automatic Control Department, Cinvestav, Mexico}
\affil[2]{Department of Electrical Engineering and Computer Science at the University of Liège, Belgium}
\begin{document}

\maketitle

\begin{abstract}
    
    Neuromorphic control is receiving growing attention due to the multifaceted advantages it brings over more classical control approaches, including: sparse and on-demand sensing, information transmission, and actuation; energy efficient designs and realizations in neuromorphic hardware; event-based signal processing and control signal computation. However, a general control-theoretical formalization of what ``neuromorphic control systems'' are and how we can rigorously analyze, design, and control them is still largely missing. In this note, we suggest a possible path toward formalizing neuromorphic control systems. We apply the proposed framework to a rhythmic control case study and rigorously show how it has the potential to make neuromorphic control systems analysis and design amenable to mature control theoretical approaches like describing function analysis and harmonic balance, fast-slow analysis, discrete and hybrid systems, and robust optimization. 
    
\end{abstract}

\section{INTRODUCTION}

The original aim of Neuromorphic Engineering (NE) was to design large-scale analog electronic circuits that mimic the architecture and functions of biological nervous systems~\cite{mead1990neuromorphic}. Across the years, NE diversified in a variety of different approaches and goals, like the design of low-power sensors~\cite{Lichtsteiner2008dvs} and neural processors~\cite{moradi2017scalable}, memristive materials~\cite{john2022reconfigurable}, and many blends and interactions with classical (digital) machine learning~\cite{yik2025neurobench}. But all recent and less recent neuromorphic approaches share a fundamental neuro-inspired property: the ``spiking'' or ``event-based'' nature of communication and computation. The asynchronous, on-demand nature of event-based signal processing and computing ensures speed, parsimonious power consumption, and scalability by overcoming the Von Neumann digital dichotomy between memory and logic units: like neuron spikes in a brain, each event is both a memory and a unit of computation.

Although control theory was close to neuromorphic engineering at its origin~\cite{de1990neuron}, the two fields have remained relatively apart until recently, with the appearance of both control-theoretical approaches to neuromorphic systems design~\cite{castanos2017implementing,ribar2021} and the use of neuromorphic event-based sensors in fast feedback control loops~\cite{singh2018regulation}. The potential of control theoretical approaches to and from neuromorphic engineering is more and more clear~\cite{sepulchre2019control,petri2024analysis,huijzer2025modelling,shahhosseini2024operator,schmetterling2024,che2023dominant,juarez2025collective,cathcart2024spiking}. Because events are {\it discrete} entities evolving in {\it continuous} time, neuromorphic control systems have the potential to inherit the best of both analog/continuous-time and digital/discrete-time control approaches~\cite{sepulchre2022spiking,sepulchre2019control}. Furthermore, many existing control strategies (like event-based~\cite{astrom2008event} and event-triggered control~\cite{heemels2012introduction}, maximum hands-off control~\cite{nagahara2016}, sliding-mode control~\cite{shtessel2014sliding}, and hybrid systems theory~\cite{goebel}) share at least some similarities with the neuromorphic approach and provide sets of mature tools for the analysis and the design of neuromorphic control systems. 

However, what a neuromorphic control system is remains unclear. As opposed to continuous-time, digital, event-based, or hybrid systems, we still miss a general definition of ``neuromorphic control systems''. The lack of such a definition can be detrimental to neuromorphic control theory by encouraging piecemeal and isolated, instead of more holistic and collective, rigorous theoretical developments and approaches to this new discipline and class of systems.

Motivated by these observations, we build upon recent efforts aiming at defining and characterizing neuromorphic control systems~\cite{fernandez2023neuromorphic,schmetterling2024} to propose a general modeling architecture for neuromorphic control systems. We rigorously define all the elements of the proposed architecture in such a way that they respect the constraints imposed by neuromorphic event-based hardware. We further propose a general, purely input-output way of formulating neuromorphic control problems in a way that conforms to the distinctive properties of neuromorphic sensing, computing, and actuation. We then restrict our attention to a class of rhythmic neuromorphic control problems and show how, in this setting, the proposed modeling framework can make the analysis of the resulting neuromorphic closed-loop systems amenable to a number of mature control-theoretical tools. Similarly to~\cite{fernandez2023neuromorphic,schmetterling2024}, the proposed methods are applied and developed in detail on the neuromorphic pendulum control problem, for which we rigorously establish error bounds depending on the system and control parameters. Interestingly, the closed-loop system exhibits a discrete-time Hopf-like bifurcation. We formulate a robust optimization problem on the error and show that, remarkably, the optimal solution lies precisely at the bifurcation point. The complementary paper~\cite{petri2025} engages the same problem from the hybrid system perspective. It develops a hybrid model of neuromorphic pendulum control, shows the existence and uniqueness of a hybrid limit cycle, and certifies robustness by establishing a uniform exponential stability property. Both papers provide complementary insights and results that are of independent interest.

\section{GENERAL ARCHITECTURE}

We propose the feedback control architecture in Fig.~\ref{fig:arch} as the fundamental and most basic representative of a neuromorphic control system. The plant is assumed to be a single-input and single-output linear time-invariant system, characterized by a transfer function $P$ in the field of rational functions with real coefficients $\RR(s)$.


\begin{figure}[thpb]
    \centering
    \includegraphics[width=\columnwidth]{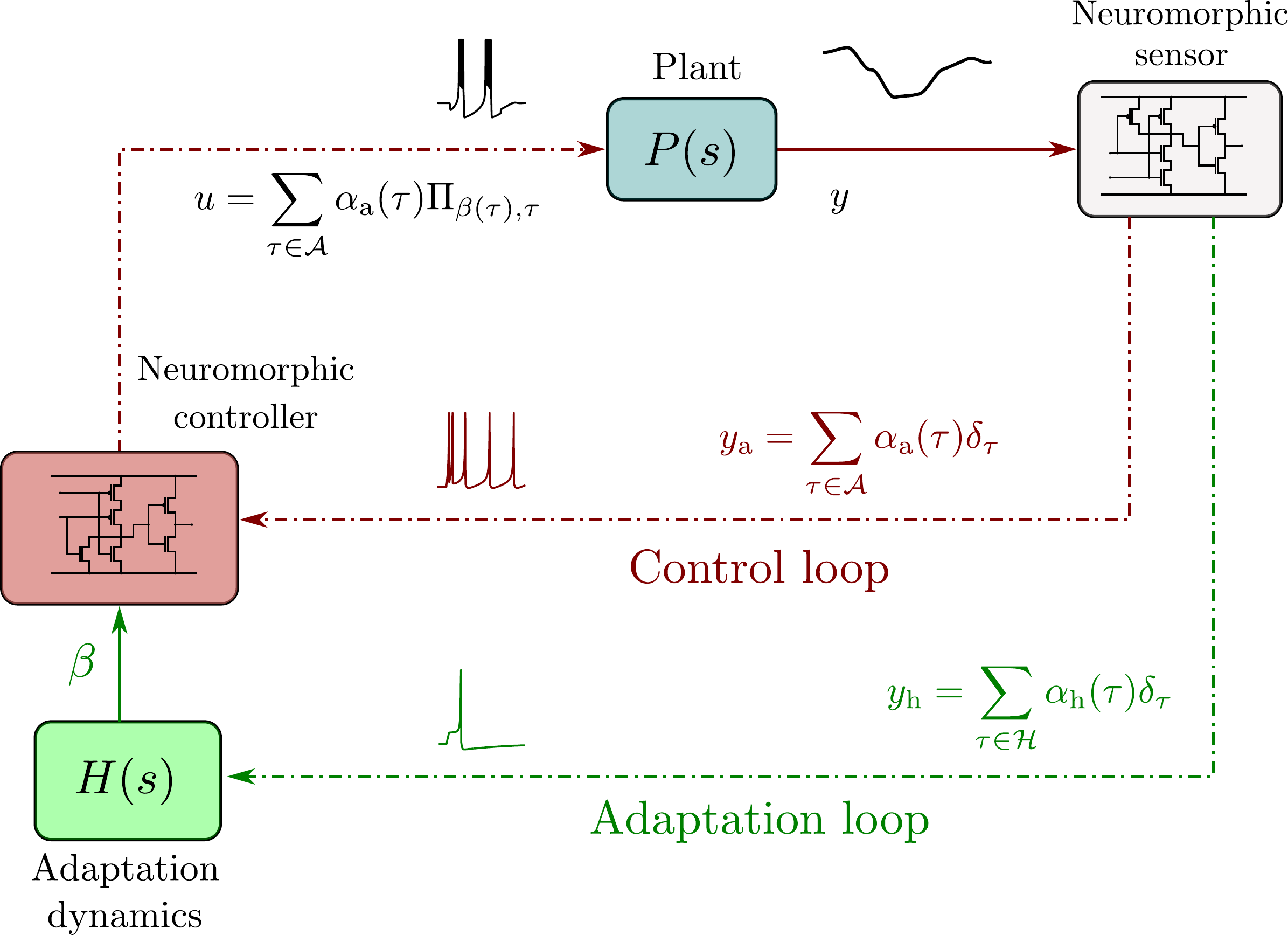}
    \caption{Neuromorphic-control architecture.}
    \label{fig:arch}
\end{figure}

\subsection{Neuromorphic Sensors}


The sensors of a neuromorphic control system are distinctly different as compared to both analog, continuous-time sensors and digital, clocked sensors. A neuromorphic sensor is made of three basic elements: an analog front-end that receives a transduced analog signal of the plant output; a ``threshold'' element that triggers an all-or-none response when specific conditions or behaviors are met or detected in the plant output; an event-generator that
transmits the all-or-none response of the threshold element to the rest of the system as ``impulses'' (usually transmitted as few-bit events through an asynchronous digital channel).

Sensory events are modeled as Dirac delta distributions. More precisely, a sensory event generated at time $\tau$ is represented by the, possibly weighted, Dirac delta distribution $\delta_\tau$ satisfying $\int_{-\infty}^\infty f(t)\delta_{\tau}(t)\rd t = f(\tau)$ for all continuous functions $f : \RR \to \RR$.

Note that the same plant output can be processed, encoded, and transmitted by different neuromorphic sensors. The threshold condition that has to be met to generate a sensory event is what distinguishes two neuromorphic sensors encoding the same signal. Threshold-based neuromorphic sensing is intrinsically semantic, in the sense that each event generated by the sensor has a well-defined meaning, e.g., ``luminosity is increasing'' or ``the maximum has been reached''. This is in sharp contrast to both analog and digital sensing, which mostly aims at transducing the value of a physical variable of interest into an analog or a digital signal. Threshold-based neuromorphic sensing is also fundamentally different to event-based sampling~\cite{astrom2008event,heemels2012introduction} and $\Delta/\Delta\Sigma$-modulation~\cite{Sira-Ramirez2015,razavi2016delta}.

In the simple architecture of Fig.~\ref{fig:arch} there are two kinds of sensory events: \emph{actuation events},  meant to trigger the activation of the neuromorphic control loop actuator, and \emph{adaptation events}, meant to trigger an impulsive response in the adaptation block, as determined by its transfer function $H \in \RR(s)$.

Let $\AC \subset \RR$ be the set of instants of time at which the actuation events are generated. At each actuation event, a Dirac delta is sent from the sensor to the controller. The event-based input from the actuation sensor to the controller is 
\begin{displaymath}
    y_\ra(t) = \sum_{\tau \in \AC}\alpha_\ra(\tau)\delta_{\tau}(t) \;.
\end{displaymath}
The function $\alpha_\ra : \AC \to \left\{+1,-1\right\}$ is used to attach to each event a 1-bit information content that can be used to define and distinguish \emph{positive} and \emph{negative} actuation events.

Similarly, let $\HH \subset \RR$ be the set of instants of time at which the adaptation events are generated. At each adaptation event, a Dirac delta is sent from the sensor to the adaptation block. The event-based input from the adaptation sensor to the adaptation block is
\begin{displaymath}
    y_\rh(t) = \sum_{\tau \in \HH}\alpha_\rh(\tau)\delta_{\tau}(t) \;,
\end{displaymath}
where, as before, $\alpha_\rh : \HH \to \{+1,-1\}$ serves to distinguish \emph{positive} and \emph{negative} events.

\subsection{Neuromorphic actuation}

Upon reception of a signed spike at time $\tau \in \AC$, the controller fires a burst of spikes that lasts $\beta(\tau) > 0$ units of time and of the same sign as the received spike. Since the inter-burst period (i.e., the dwel time between two bursts) is designed to be much shorter than the dominant time constant of $P$, a burst can be reasonably modeled by a finite pulse
\begin{displaymath}
    \Pi_{\beta,\tau}(t) =
    \begin{cases}
        1 & \text{if $\tau \le t \le \tau + \beta$} \\
        0 & \text{otherwise}
    \end{cases} \;,
\end{displaymath}
so that the control signal takes the form
\begin{equation} \label{eq:u}
    u(t) = \sum_{\tau \in \AC}\alpha_\ra(\tau)\Pi_{\beta(\tau),\tau}(t) \;.
\end{equation}

The choice of using bursting, instead of spiking, neurons for the actuation stage stems mainly from two reasons. Single, isolated spikes, are mostly filtered out by the low-pass nature of the actuators. Conversely, a burst of spikes received in rapid succession is integrated by the actuator dynamics, leading to reliable responses. In biological nervous systems, most motor neurons are indeed busters. Also, as opposed to single all-or-none spikes, bursts also possess a graded nature. Increasing (decreasing) the burst duration or the number of spikes per burst leads to stronger (weaker) actuation, which enables more finely tuned control actions.



\subsection{Neuromorphic adaptation}

Assuming that the adaptive unit starts at rest, the burst width is given by
\begin{equation} \label{eq:beta}
    \beta(t) = \sum_{\tau \in \HH}\alpha_\rh(\tau)h(t-\tau) \;,
\end{equation}
where $h : \RR \to \RR$ is the impulse response of the adaptation unit, i.e., $h(t) = \mathcal{L}^{-1}\{H(s)\}$ with $\mathcal{L}$ the Laplace transform. The causality of the adaptive unit implies that $h(t-\tau) = 0$ for $t \le \tau$. Thus, the expression~\eqref{eq:beta} reduces to
\begin{equation} \label{eq:beta_causal}
    \beta(t) = \sum_{\substack{\tau \in \HH \\ \tau < t}}\alpha_\rh(\tau)h(t-\tau) \;.
\end{equation}

To summarize, the designer parameters are the transfer function $H$, the event generation rule fo the sets $\AC$ and $\HH$, and the functions $\alpha_\ra$ and $\alpha_\rh$.

\subsection{Neuromorphic control problem formulation}

The sparse and impulsive temporal nature of neuromorphic sensors and actuators calls for new ways of formalizing control objectives in the neuromorphic setting. 

Consider, for instance, the problem of generating a stable periodic oscillatory behavior with a desired amplitude $A^\star$ in a controlled plant. A common approach~\cite{pasandi2022integrated,ijspeert2007swimming} to formalize such a problem is to generate a reference signal $y^\star(t)$ such that $\max_{t\geq0}|y^\star(t)|=A^\star$ and then to apply, e.g., output regulation or tracking techniques to ensure that $y(t)$ asymptotically converges to $y^\star(t)$. Such an approach does not naturally generalize to the neuromorphic setting, where sensors might only be able to generate 1-bit events and only when the output of the system crosses specific thresholds, which makes $y(t)$ fundamentally unmeasurable.

Alternatively, consider $m$ input-output (non-necessarily causal) continuous operators
\begin{equation*}
    \mathcal{E}_{i}: \mathcal{L}_{q_ie}(\RR_{\geq 0},\RR) \to \mathcal{L}_{q_ie}(\RR_{\geq 0},\RR)\,,
\end{equation*}
where $q_1,\ldots,q_m\in\mathbb N$ and $\mathcal{L}_{q_ie}(\RR_{\geq t_0},\RR)$ denotes the extended $\mathcal{L}_{q_i}(\RR_{\geq t_0},\RR)$ space. A neuromorphic control problem can then be formalized as follows. Given a performance criterion $\varepsilon\geq 0$, design neuromorphic sensors, controller, and adaptation dynamics for the architecture in Fig.~\ref{fig:arch} guaranteeing that 
    \begin{equation}\label{eq: general neuromorphic}
        \limsup_{t\to\infty} \big|\mathcal{E}_{i}(y)(t)\big|\leq \varepsilon,\quad\text{for all $i\in\{1,\ldots,m\}$} \,.
    \end{equation}
The choice of the operators $\mathcal{E}_{i}$ is key. For instance, the problem of generating a stable periodic oscillatory behavior with a desired amplitude $A^\star$ in a controlled plant can be formalized as in~\eqref{eq: general neuromorphic} with $m=2$, $q_1 = 2$, $q_2 = \infty$, and
\begin{subequations}\label{eq:rhythm control problem}
    \begin{align}
    &\mathcal{E}_{1}(y)(t)=y(t+2\pi/\omega) - y(t) \,, \label{eq:periodic} \\ 
    &\mathcal{E}_{2}(y)(t)= A^\star - \max_{\tau\geq t}|y(\tau)| \,,
\end{align}
\end{subequations}
where, in~\eqref{eq:periodic}, $\omega>0$ is a free design parameter. 

Such a formalization constitutes a semantic, instead of representational, approach to event-based sensing and control system design. Instead of representing signals (e.g., deviation from a reference) in some absolute, quantitative format suitable for algorithmic processing, each event can bring rich, context-dependent meaning that enables fast and adaptive decision-making about what and how to sense, and when and how to act. For instance, a 1-bit event is sufficient to bring the message ``the sensor signal reached a local maximum; at peak it was above (below) the reference level''. This approach can foster the design of neuromorphic sensors and controllers that use sparser, but semantically richer, event-based signals, as thoroughly illustrated in the remainder of the paper on the rhythmic control problem defined by~\eqref{eq:rhythm control problem}.







\section{RHYTHMIC NEUROMORPHIC CONTROL DESIGN}

In this section, we set to solve the neuromorphic control problem defined by~\eqref{eq:rhythm control problem}. In particular, we outline the main assumptions, establish the general design procedure, and justify the choices made for some of the design parameters.



One of the main challenges in analyzing the system shown in Fig.~\ref{fig:arch} is that the impulses in $y_\rh$ cause jumps in $\beta$, effectively creating a hybrid system like those discussed in~\cite{goebel}. We approach the hybrid nature of the problem from a singular perturbation perspective (see, e.g.,~\cite{kokotovic,khalil}). We argue that, by making the adaptation dynamics slow enough, a sufficient timescale separation exists so that the dynamics on the fast timescale are purely continuous. In contrast, those on the slow timescale are purely discrete.

The rationale is the following:
\begin{enumerate}
    \item Assume that, compared with the dynamics of $P$ in closed loop with the neuromorphic controller (red loop in Fig.~\ref{fig:arch}), the dynamics of the adaptive unit $H$ are sufficiently slow so that $\beta$ can be considered constant on a fast timescale. \label{it:time_scale}
    \item Apply the describing-function method~\cite{gelb,khalil} to establish the existence of a limit cycle of amplitude $A > 0$ on the fast timescale.
    \item Design $H$ so that $A \to A^\star$ on the slow timescale and such that the assumption in Item~\ref{it:time_scale} holds. By the discontinuous nature of $\beta$, the slow dynamics are easier to analyze in discrete time.
\end{enumerate}

Let us to elaborate on the last two steps.

\subsection{Fast dynamics}

In the fast timescale, $\beta$ remains constant. Using the describing-function method~\cite{gelb,khalil}, we investigate the potential existence of limit cycles and estimate their amplitude and frequency as functions of $\beta$. 

We begin by narrowing down the actuation events.

\begin{assumption} \label{ass:actuation_events}
    We have $\AC = \left\{ \tau \in \RR \mid g_\ra(y(\tau),\dot{y}(\tau)) = 0 \right\}$
    for some function $g_\ra : \RR \times \RR \to \RR$ satisfying
    \begin{subequations} \label{eq:symmetry}
    \begin{equation}
        g_\ra(y,\dot{y}) = 0 \quad \Longleftrightarrow \quad g_\ra(-y,-\dot{y}) = 0 
    \end{equation}
    and such that $y(t) = A\sin(\omega t)$, with $A,\omega > 0$, implies that $\AC$ is at most countable. Also,
    $\alpha_\ra(\tau) = \hat{\alpha}_\ra(y(\tau),\dot{y}(\tau))$, $\tau \in \AC$, for some function $\hat{\alpha}_\ra : \RR \times \RR \to \RR$ satisfying
    \begin{equation}
        \hat{\alpha}_\ra(-y,-\dot{y}) = -\hat{\alpha}_\ra(y,\dot{y}) \;.
    \end{equation}
    \end{subequations}
\end{assumption}

The conditions on $g_\ra$ and $\hat{\alpha}_\ra$ are analog to the condition of a nonlinearity being \emph{odd} (see~\cite{gelb}). They can be relaxed at the expense of additional notation and computational burden.

\begin{proposition}
    Consider the map $y \mapsto u$ defined by~\eqref{eq:u} with $\beta$ constant and $\AC$ and $\alpha_\ra$ satisfying Assumption~\ref{ass:actuation_events}. The map has a well-defined describing function $N_\beta(A,\omega)$.
\end{proposition}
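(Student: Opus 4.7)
The plan is to verify the two requirements for a well-defined describing function: that the input $y(t) = A\sin(\omega t)$ induces a $T$-periodic output $u(t)$ with $T = 2\pi/\omega$, and that the first Fourier coefficients of $u$ at frequency $\omega$ exist as finite real numbers.

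First, I would substitute the sinusoid into the event-generation rule. Since $(y(t), \dot y(t))$ is $T$-periodic, the definition $\AC = \{\tau : g_\ra(y(\tau), \dot y(\tau)) = 0\}$ gives $\AC + T = \AC$ directly. Using that $(y(t + T/2), \dot y(t + T/2)) = (-y(t), -\dot y(t))$ together with the symmetry hypothesis $g_\ra(y, \dot y) = 0 \Leftrightarrow g_\ra(-y, -\dot y) = 0$, I would establish the stronger $T/2$-invariance $\AC + T/2 = \AC$. The antisymmetry of $\hat\alpha_\ra$ then yields $\alpha_\ra(\tau + T/2) = -\alpha_\ra(\tau)$, i.e., positive and negative events alternate at half-period.

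Next, since $\AC$ is at most countable by Assumption~\ref{ass:actuation_events} and $T$-periodic by the previous step, I would (under the natural interpretation that the countable set accumulates nowhere on $[0, T)$, as would be automatic for generic $g_\ra$) enumerate $\AC \cap [0, T) = \{\tau_1, \dots, \tau_n\}$. Substituting into~\eqref{eq:u} with constant $\beta$ gives
\begin{displaymath}
u(t) = \sum_{k \in \mathbb Z}\sum_{i=1}^{n} \alpha_\ra(\tau_i)\,\Pi_{\beta, \tau_i + kT}(t),
\end{displaymath}
which is manifestly $T$-periodic. Combining the half-period invariance of $\AC$, the translation identity $\Pi_{\beta, \tau + T/2}(t + T/2) = \Pi_{\beta, \tau}(t)$, and the sign flip on $\alpha_\ra$, I conclude $u(t + T/2) = -u(t)$; in particular, $u$ has zero mean and only odd harmonics. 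Moreover, $u$ is uniformly bounded on $[0, T]$, being a finite sum of signed indicator pulses.

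Finally, the boundedness of $u$ ensures that the Fourier coefficients
\begin{displaymath}
a_1(A, \omega) = \frac{2}{T}\int_0^T u(t)\cos(\omega t)\,\rd t, \quad b_1(A, \omega) = \frac{2}{T}\int_0^T u(t)\sin(\omega t)\,\rd t
\end{displaymath}
exist as finite real numbers (for $\beta$ fixed), so that the describing function $N_\beta(A, \omega) := (b_1 + j a_1)/A$ is well-defined. I anticipate the principal subtle point to be the passage from ``at most countable'' to ``locally finite'': without further regularity of $g_\ra$, $\AC \cap [0, T)$ could in principle admit accumulation points, in which case the pulse sum might fail to be bounded, and one would instead have to argue integrability of $u$ via a direct density estimate on the events per period. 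A short remark spelling out the tacit local-finiteness assumption would close this gap cleanly.
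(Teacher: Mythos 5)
Your proposal is correct and follows essentially the same route as the paper's proof: periodicity of $u$ from the static, time-invariant event rule, zero mean (here sharpened to the half-period antisymmetry $u(t+T/2)=-u(t)$) from the symmetry condition~\eqref{eq:symmetry}, and existence of the first Fourier coefficients. Your remark that ``at most countable'' does not by itself guarantee local finiteness of $\AC$ is a fair observation about an implicit regularity assumption that the paper's own proof also glosses over, but it does not change the argument.
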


\begin{proof}
    Suppose that $y(t) = A\sin(\omega t)$. The control~\eqref{eq:u} is periodic with frequency $\omega$. This follows from the fact that $g_\ra$ is static, time-invariant, and depends only on the periodic function $y$ and its derivative. Thus,
    \begin{displaymath}
        u(t) = a_0 + \sum_{n=1}^\infty \big(a_n\cos\left(\omega n t\right)+b_n\sin \left(\omega n t \right)\big) \;.
    \end{displaymath}
    By the symmetry condition~\eqref{eq:symmetry}, the average of $u$ on each period is zero, so $a_0 = 0$. Hence, the describing function~\cite[Eqs.~(2.2-16), (2.2-28)-(2.2-34)]{gelb} is
    \begin{displaymath}
        N_\beta(A,\omega) = \frac{b_1+ja_1}{A} \;,
    \end{displaymath}
    where $a_1$ and $b_1$ are the first Fourier coefficients depending on $A$, $\omega$, and $\beta$ implicitly.
\end{proof}

Suppose that the harmonic balance (HB) equation
\begin{equation} \label{eq:harmonic}
    N_\beta(A,\omega)P(j\omega) - 1 = 0 
\end{equation}
is satisfied for some frequency $\omega$ and amplitude $A$, both depending on $\beta$. Then, a limit cycle of amplitude $A$ and frequency $\omega$ is established, provided that $P$ has a low-pass characteristics of a filter with negligeable output for frequencies above $\omega$ (see~\cite{gelb} for details). We assume that such a limit cycle is stable.

\subsection{Slow dynamics}

In the slow timescale, the state trajectory is assumed to have already converged to a periodic solution whose amplitude and frequency are implicitly determined by~\eqref{eq:harmonic}. To emphasize the dependence on $\beta$, we write $A = \hat{A}(\beta)$ and $\omega = \hat{\omega}(\beta)$. The purpose of the adaptation loop is to let $\hat{A}(\beta)$ approach the desired amplitude $A^\star$.

Let $t_k \in \AC$ be the time instant corresponding to the $k$th-actuation event. Evaluating~\eqref{eq:beta_causal} at $t = t_{k+1}$ gives
\begin{equation} \label{eq:beta_t_a}
    \beta\left(t_{k+1}\right) = \sum_{\substack{\tau \in \HH \\ \tau < t_{k+1}}}\alpha_\rh(\tau)h\left(t_{k+1}-\tau\right) \;.
\end{equation}
The difference between the actual and desired amplitude is the relevant information to be transmitted along the adaptive loop (green path in Fig.~\ref{fig:arch}). Note also that the amplitude of the limit cycle can effectively be measured when $\dot y=0$.

\begin{assumption} \label{ass:adaptation_events}
    We have $\HH = \left\{ \tau \in \RR \mid \dot{y}(\tau) = 0 \right\}$ and
    \begin{equation} \label{eq:alpha_h}
        \alpha_\rh(\tau) \in \sgn(A^\star-y(\tau)) \;, \quad \tau \in \HH \;,
    \end{equation}
    where $\sgn$ is the multivalued \emph{signum} function: 
    \begin{displaymath}
        \sgn(x) = 
        \begin{cases}
            \{-1\} & \text{if $x < 0$} \\
            [-1,1] & \text{if $x = 0$} \\
             \{1\} & \text{if $x > 0$}
        \end{cases} \;.
    \end{displaymath}
\end{assumption}

Given an adaptation event time $\tau \in \HH$, define $\rho_\tau$ as the preceding actuation event time, $\rho_\tau = \max\{r \in \AC \mid r \le \tau\}$. By substituting~\eqref{eq:alpha_h} in~\eqref{eq:beta_t_a} and noting that $y(\tau) = \hat{A}(\beta(\rho_\tau))$, 
we obtain
\begin{equation} \label{eq:beta_t_a_2}
    \beta\left(\tau_{k+1}\right) \in \sum_{\substack{\tau \in \HH \\ \tau < \tau_{k+1}}}\sgn\Big(A^\star-\hat{A}\big(\beta(\rho_\tau)\big)\Big)h\left(\tau_{k+1}-\tau\right) \;.
\end{equation}

We consider a first-order low-pass transfer function for the adaptive unit. This will considerably simplify~\eqref{eq:beta_t_a_2}.

\begin{assumption} \label{ass:adaptive}
    The transfer function of the adaptive unit is
    \begin{equation} \label{eq:low_pass}
        H(s) = \frac{\gamma}{s+c} \;, \quad \gamma, c > 0 \;.
    \end{equation}
\end{assumption}

Note that the inverse Laplace transform of~\eqref{eq:low_pass} is
\begin{displaymath}
    h(t) = 
    \begin{cases}
                      0 & \text{if $t \le 0$} \\
        \gamma e^{-c t} & \text{if $t > 0$} 
    \end{cases} \;.
\end{displaymath}
This impulse response has a simple but important property:
\begin{equation} \label{eq:group_h}
    h(t_1 + t_2) = \frac{1}{\gamma}h(t_1)h(t_2) \;, \quad t_1, t_2 \ge 0 \;,
\end{equation}
indeed,
\begin{equation} \label{eq:h_1_order}
    \gamma e^{-c(t_1 + t_2)} = \gamma e^{-c t_1}e^{-c t_2} \;.
\end{equation}

\begin{proposition} \label{prop:recursion}
    Under Assumptions~\ref{ass:adaptation_events} and~\ref{ass:adaptive}, the expression~\eqref{eq:beta_t_a} can be written as the recursion
    \begin{multline} \label{eq:beta_recursion}
        \beta\left(t_{k+1}\right) \in \frac{h\left(\Delta t_k\right)}{\gamma}\beta\left(t_{k}\right) + \\
            \sum_{\substack{\tau \in \HH \\ t_{k} \le \tau < t_{k+1}}}\sgn\Big(A^\star-\hat{A}\big(\beta(\rho_\tau)\big)\Big)h\left(t_{k+1}-\tau\right) \;.
    \end{multline}
\end{proposition}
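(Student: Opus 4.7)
The plan is a direct computation: apply Assumption~\ref{ass:adaptation_events} to equation~\eqref{eq:beta_t_a} to obtain~\eqref{eq:beta_t_a_2}, then split the summation index at $\tau = t_k$ and use the semigroup property~\eqref{eq:group_h} on the ``old'' contribution to reconstitute $\beta(t_k)$.

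More concretely, first I would substitute $\alpha_\rh(\tau) \in \sgn(A^\star - y(\tau))$ and the equality $y(\tau) = \hat{A}(\beta(\rho_\tau))$ (valid on the slow timescale, since at an adaptation event $\tau \in \HH$ we have $\dot y(\tau)=0$, so $|y(\tau)|$ equals the limit-cycle amplitude induced by the value of $\beta$ active at the preceding actuation event $\rho_\tau$) into~\eqref{eq:beta_t_a}, exactly as already done to obtain~\eqref{eq:beta_t_a_2}. Then I would partition $\{\tau\in\HH : \tau<t_{k+1}\}$ as the disjoint union of $\{\tau\in\HH : \tau<t_k\}$ and $\{\tau\in\HH : t_k\le\tau<t_{k+1}\}$, splitting~\eqref{eq:beta_t_a_2} into two corresponding sums.

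The core step is the treatment of the first sum. For every $\tau<t_k$, both $\Delta t_k := t_{k+1}-t_k$ and $t_k-\tau$ are positive, so the semigroup property~\eqref{eq:group_h} applies and yields
\begin{equation*}
    h(t_{k+1}-\tau) = h\bigl(\Delta t_k + (t_k-\tau)\bigr) = \frac{1}{\gamma}\, h(\Delta t_k)\, h(t_k-\tau) \;.
\end{equation*}
Factoring $h(\Delta t_k)/\gamma$ outside the sum, the remaining sum is precisely the right-hand side of~\eqref{eq:beta_t_a_2} evaluated at $t_k$ instead of $t_{k+1}$, i.e., a valid value of $\beta(t_k)$. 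Adding back the second sum gives~\eqref{eq:beta_recursion}.

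The only non-routine aspect is the bookkeeping imposed by the set-valued $\sgn$: the statement is an inclusion, so one must argue that any selection $\alpha_\rh(\tau) \in \sgn(A^\star - \hat A(\beta(\rho_\tau)))$ used to define $\beta(t_k)$ via~\eqref{eq:beta_t_a_2} can be reused inside the recursion, with the selections over $t_k\le\tau<t_{k+1}$ then filling in the residual term. This is immediate once one fixes a global selection on $\HH\cap(-\infty,t_{k+1})$ and reads the equality sign pointwise. A small point to watch is the boundary index $\tau=t_k$: since the sums in~\eqref{eq:beta_t_a} are strict ($\tau<t_{k+1}$, resp.\ $\tau<t_k$), a coincident adaptation event at $t_k$ is unambiguously placed in the second sum of~\eqref{eq:beta_recursion}, consistent with the half-open range $t_k\le\tau<t_{k+1}$. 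No other obstacle is anticipated.
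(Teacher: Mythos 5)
Your proposal is correct and follows essentially the same route as the paper: substitute Assumption~\ref{ass:adaptation_events} to get~\eqref{eq:beta_t_a_2}, split the sum at $t_k$, and apply the semigroup property~\eqref{eq:group_h} to factor $h(\Delta t_k)/\gamma$ out of the ``old'' part, which is then recognized as $\beta(t_k)$. Your additional remarks on the set-valued selection and the boundary index $\tau = t_k$ are more explicit than the paper's two-line argument but do not change the substance.
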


\begin{proof}
    Define $\Delta t_k = t_{k+1} - t_{k}$ and let us split the sum in~\eqref{eq:beta_t_a_2} as
    \begin{multline}
        \beta\left(t_{k+1}\right) \in 
            \sum_{\substack{\tau \in \HH \\ \tau < t_{k}}}\sgn\Big(A^\star-\hat{A}\big(\beta(\rho_\tau)\big)\Big)h\left(\Delta t_k + t_{k} - \tau\right) + \\
            \sum_{\substack{\tau \in \HH \\ t_{k} \le \tau < t_{k+1}}}\sgn\Big(A^\star-\hat{A}\big(\beta(\rho_\tau)\big)\Big)h\left(t_{k+1}-\tau\right) \;.
    \end{multline}
    By~\eqref{eq:group_h} and~\eqref{eq:beta_t_a_2} we recover~\eqref{eq:beta_recursion}.
\end{proof}

\begin{corollary} \label{cor:recusion}
    Suppose that, in addition to the conditions of Proposition~\ref{prop:recursion}, the function $\hat{A}$ is monotonically increasing and there exists a burst width $\beta^\star \ge 0$ such that $\hat{A}(\beta^\star) = A^\star$. Then, the recursion~\eqref{eq:beta_recursion} simplifies to
    \begin{multline} \label{eq:beta_recursion_2}
        \beta\left(t_{k+1}\right) \in \frac{h\left(\Delta t_k\right)}{\gamma}\beta\left(t_{k}\right) + \\
            \sum_{\substack{\tau \in \HH \\ t_{k} \le \tau < t_{k+1}}}\sgn\left(\beta^\star-\beta\left(\rho_\tau\right)\right)h\left(t_{k+1}-\tau\right) \;.
    \end{multline}
\end{corollary}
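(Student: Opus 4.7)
The plan is to show that the two recursions differ only in the argument of the $\sgn$ function, and that under the stated monotonicity assumption these two arguments yield the same multivalued output. Concretely, I would establish the pointwise identity
\[
\sgn\big(A^\star - \hat{A}(\beta)\big) = \sgn(\beta^\star - \beta)
\]
for every $\beta \in \RR$. Substituting this identity term by term into the sum appearing in~\eqref{eq:beta_recursion} immediately produces the right-hand side of~\eqref{eq:beta_recursion_2}, and the set-valued inclusion carries over verbatim.

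To prove the identity, I would argue by cases. If $\beta < \beta^\star$, strict monotonicity of $\hat{A}$ together with $\hat{A}(\beta^\star) = A^\star$ gives $\hat{A}(\beta) < A^\star$, so both sides reduce to $\{+1\}$; the case $\beta > \beta^\star$ is symmetric and produces $\{-1\}$ on both sides. When $\beta = \beta^\star$, both arguments vanish and both signums equal $[-1,+1]$, which preserves the multivalued equality.

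I do not foresee any technical obstacle: the argument is essentially a reparametrization made possible by the monotone (and under the intended reading, strictly monotone) character of $\hat{A}$ near $\beta^\star$. The one caveat worth flagging in the write-up is that "monotonically increasing" must be read strictly, since otherwise $\beta^\star$ need not be unique on a plateau where $\hat{A} \equiv A^\star$; however, the pointwise identity and hence the recursion remain valid for any choice of $\beta^\star$ in such a plateau. The practical content of the corollary is that the adaptation update can be implemented using only the measurable variable $\beta$, without ever evaluating, or even knowing, $\hat{A}$ explicitly.
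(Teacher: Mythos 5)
Your argument is correct and is precisely the one the paper leaves implicit (the corollary is stated without proof): monotonicity of $\hat{A}$ together with $\hat{A}(\beta^\star)=A^\star$ yields the set identity $\sgn\big(A^\star-\hat{A}(\beta)\big)=\sgn(\beta^\star-\beta)$, which is then substituted term by term into~\eqref{eq:beta_recursion}. One small caveat on your side remark: on a plateau where $\hat{A}\equiv A^\star$ the two signum sets are \emph{not} equal for $\beta\neq\beta^\star$ (one is $[-1,1]$, the other $\{+1\}$ or $\{-1\}$), so the claim that the recursion ``remains valid for any choice of $\beta^\star$ in such a plateau'' does not hold as a set identity; under the intended strict reading of ``monotonically increasing'' this situation does not arise, so the main proof is unaffected.
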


Under the assumptions of the corollary, our problem boils down to showing that, in some sense to be clarified below, $\beta\to\beta^\star$. The only remaining degrees of freedom to achieve this are the constants $c$ and $\gamma$, and the functions $\hat{\alpha}_\ra$ and $g_\ra$, whose choice depends in general on the plant $P$.

\section{CASE STUDY: OSCILLATIONS OF DESIRED AMPLITUDE ON A PENDULUM}

As in~\cite{schmetterling2024,fernandez2023neuromorphic}, we now focus on the controlled pendulum
\begin{equation} \label{eq:nonlinear}
    \ddot{y}(t) + 2\xi \omega_\rn \dot{y}(t) + \omega_\rn^2 \sin(y(t)) = \lambda u(t) \;, 
\end{equation}
where $\xi \in [0, 1]$ is the damping ratio, $\omega_\rn > 0$ the undamped natural frequency, and $\lambda > 0$ the input gain. The transfer function of the linearization of~\eqref{eq:nonlinear} is $P(s) = \lambda/(s^2 + 2\xi \omega_\rn s + \omega_\rn^2)$.

\subsection{Fast dynamics, self-sustained oscillations}

A simple choice for the functions $g_\ra$ and $\hat{\alpha}_\ra$, consistent with Assumption~\ref{ass:actuation_events}, is $g_\ra(y,\dot{y}) = y$ and $\hat{\alpha}_\ra(y,\dot{y}) \in \sgn(\dot{y})$. We compute the controller describing function $N_\beta(A,\omega)$ in the standard way~\cite{gelb}. Assume that $y(t) = A\sin(\omega t)$, $t \ge 0$. During one period, the actuation events are $\AC \cap [0,2\pi/\omega) = \left\{0, \pi/\omega\right\}$. Hence, for $\beta < \pi/\omega$, the control is $u(t) = \Pi_{\beta,0}(t) - \Pi_{\beta,\pi/\omega}(t)$, $t \in [0,2\pi/\omega)$. The coefficients of the first harmonic are $a_1 = \frac{2}{\pi}\sin(\omega \beta)$ and $b_1 = \frac{2}{\pi}\left(1 - \cos(\omega \beta)\right)$.
It follows that
\begin{subequations} \label{eq:N}
\begin{equation} \label{eq:N_amp}
    |N_\beta(A,\omega)| = \frac{\sqrt{a_1^2+b_1^2}}{A} = \frac{4}{A \pi}\sin\left(\frac{\omega \beta}{2}\right) \;,
\end{equation}
and
\begin{equation} \label{eq:N_ang}
    \angle N_\beta(A,\omega) = \arctan \frac{a_1}{b_1} = \frac{\pi-\omega \beta}{2} \;. 
\end{equation}
\end{subequations}

\begin{proposition} \label{prop:unique}
    For every $\beta > 0$, there exists a unique amplitude $A$ and frequency $\omega$ satisfying the HB equation~\eqref{eq:harmonic}.
\end{proposition}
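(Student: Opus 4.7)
The plan is to split the complex HB equation~\eqref{eq:harmonic} into its magnitude and phase parts and to solve them sequentially: the phase equation will pin down $\omega$ uniquely, and substituting back into the magnitude equation will then pin down $A$.

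First, I would rewrite the phase condition $\angle N_\beta(A,\omega) + \angle P(j\omega) \equiv 0 \pmod{2\pi}$, using~\eqref{eq:N_ang}, as
\[
    \phi(\omega) + \frac{\omega\beta}{2} = \frac{\pi}{2} \;,
\]
where $\phi(\omega) := -\angle P(j\omega) = \arg\bigl((\omega_\rn^2 - \omega^2) + j\,2\xi\omega_\rn\omega\bigr)$. The key observation, which I would verify by a short computation using the identity $\frac{d}{d\omega}\arg z(\omega) = \mathrm{Im}(z'/z)$, is that for $\xi > 0$ the map $\phi$ is strictly increasing on $(0,\infty)$, with $\phi(0^+) = 0$ and $\phi(\omega) \to \pi$ as $\omega \to \infty$. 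Adding the strictly increasing linear term $\omega\beta/2$, the left-hand side becomes a strictly increasing function from $0$ to $\infty$, so by the intermediate value theorem it attains the value $\pi/2$ at a unique $\omega = \omega^\star > 0$. I would then note that $\omega^\star$ automatically lies in the regime where the describing function formulas~\eqref{eq:N} are valid: since $\phi(\omega^\star) \geq 0$, the phase equation forces $\omega^\star\beta \leq \pi$.

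With $\omega^\star$ uniquely determined, I would substitute it into the magnitude condition $|N_\beta(A,\omega^\star)|\,|P(j\omega^\star)| = 1$ and solve using~\eqref{eq:N_amp} to obtain the unique positive amplitude
\[
    A^\star = \frac{4}{\pi}\sin\!\left(\frac{\omega^\star\beta}{2}\right)|P(j\omega^\star)| \;,
\]
which is strictly positive because $\omega^\star\beta \in (0,\pi)$. The only nontrivial step is the monotonicity of $\phi$; conceptually, it says that the denominator of $P(j\omega)$ traces a curve in the open upper half-plane (for $\xi > 0$) whose argument rotates monotonically from $0$ to $\pi$. The argument as sketched tacitly assumes $\xi > 0$; in the undamped limit $\xi = 0$ the transfer function $P$ is singular at $\omega = \omega_\rn$ and $\phi$ is piecewise constant, so that edge case, if relevant, would require a separate treatment.
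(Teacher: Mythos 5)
Your proof follows essentially the same route as the paper's: both reduce the HB equation to the phase condition, exploit that $-\angle P(j\omega)$ is strictly increasing while the controller phase $(\pi-\omega\beta)/2$ is strictly decreasing (you simply merge the two into a single increasing function crossing $\pi/2$), and then recover the unique amplitude from the magnitude condition via~\eqref{eq:N_amp}. Your explicit check that $\omega^\star\beta\le\pi$ and your caveat about the undamped case $\xi=0$ are minor refinements of points the paper leaves implicit.
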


\begin{proof}
The HB equation~\eqref{eq:harmonic} is equivalent to
\begin{equation} \label{eq:HB}
    |N_\beta(A,\omega)|=\frac{1}{|P(j\omega)|} \quad \text{and} \quad \angle N_\beta(A,\omega) = -\angle P(j\omega) \;.   
\end{equation}
Note that $-\angle P(j\omega)$ is a strictly monotonically increasing function. On the other hand, we can see from~\eqref{eq:N_amp} that $\angle N_\beta(A,\omega)$ is a strictly monotonically decreasing function. Moreover,
$\angle P(0) < \angle N_\beta(A,0)$ and $\angle N_\beta(A,\pi/\beta) < -\angle P(j\pi/\beta)$.
Thus, there exists a unique $\omega \in (0,\pi/\beta)$ satisfying~\eqref{eq:HB}. Let $\hat{\omega}(\beta)$ be the function that assigns this value to $\omega$ for each $\beta$. It follows from~\eqref{eq:N_amp} that
\begin{equation}\label{eq:amp_of_beta}
   A =  \hat A(\beta) = \frac{4}{\pi}\left|P\big(j\hat{\omega}(\beta)\big)\right|\sin\left(\frac{\hat{\omega}(\beta) \beta}{2}\right) \;.
\end{equation}
\end{proof}

\begin{figure}
    \centering
    \subfigure[Frequency]{\includegraphics[width=0.8\columnwidth]{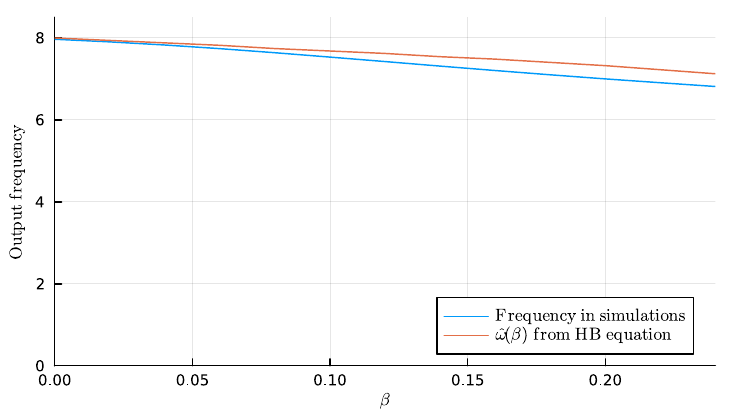}}
    \subfigure[Amplitude]{\includegraphics[width=0.8\columnwidth]{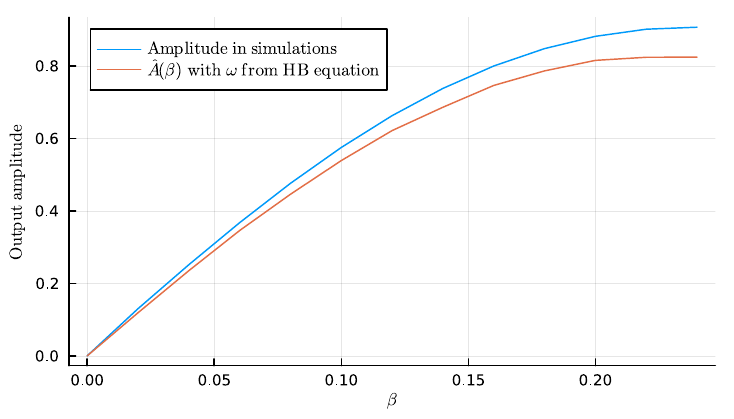}}
    \caption{Frequency and amplitude for $\beta \in [0, 0.2]$ obtained using the HB equation (orange line) and by simulation (blue line). The pendulum parameters are $\lambda = 15$, $\xi = 0.1$, and $\omega_\rn = 8$.} 
    \label{fig:freq_ampl}
\end{figure}

Figure~\ref{fig:freq_ampl} compares the amplitude and frequency obtained by simulation with those obtained by~\eqref{eq:HB}. 

\subsection{Slow dynamics, discrete-time behavior}

We can see from Fig.~\ref{fig:freq_ampl} that $\hat{A}$ is monotonically increasing, so Corollary~\ref{cor:recusion} holds and~\eqref{eq:beta_recursion_2} reduces to
\begin{equation} \label{eq:beta_recursion_3}
    \beta\left(t_{k+1}\right) \in \frac{h\left(\Delta t_k\right)}{\gamma}\beta\left(t_{k}\right) +
        \sgn\big(\beta^\star-\beta\left(\rho_\tau\right)\big)h\left(t_{k+1}-\tau\right)
\end{equation}
with $\tau$ the unique time in $\HH$ such that $t_{k} \le \tau < t_{k+1}$. Moreover, the time difference between two consecutive actuation events is half a period. In contrast, the time interval between an actuation and the following adaptation events is a quarter of the period. We will approximate the period by $2\pi/\omega^\star$ with $\omega^\star = \hat{\omega}(\beta^\star)$, that is, $\Delta t_k = \frac{\pi}{\omega^\star}$, so that $h\left(\Delta t_k\right) = \gamma e^{-c\frac{\pi}{\omega^\star}}$ and $h\left(t_k-\tau\right) = \gamma e^{-c\frac{\pi}{2\omega^\star}}$.
Finally, note that $\rho_\tau = t_k$ for $t_{k} \le \tau < t_{k+1}$, so that
\begin{equation} \label{eq:beta_first}
    \beta_{k+1} \in e^{-c\frac{\pi}{\omega^\star}} \beta_k - \gamma \sgn\left(\beta_k-\beta^\star\right)e^{-c\frac{\pi}{2\omega^\star}}\,,
\end{equation}
where, for notational simplicity, we let $\beta_k = \beta(t_k)$.

Our task now is to study the dynamics of the error $\tilde{\beta}_k = \beta_k - \beta^\star$. It follows from~\eqref{eq:beta_first} that
\begin{equation} \label{eq:beta_error}
    \tilde{\beta}_{k+1} \in -g_0 + g_1 \tilde{\beta}_k - g_2 \sgn(\tilde{\beta}_k)
\end{equation}
with
\begin{equation} \label{eq:gs}
    g_0 = (1 - e^{-c\frac{\pi}{\omega^\star}}) \beta^\star > 0 \;, \quad g_1 = e^{-c\frac{\pi}{\omega^\star}} \in (0,1) \;, \quad
        \text{and} \quad  g_2 = \gamma e^{-c\frac{\pi}{2\omega^\star}} > 0\;.
\end{equation}
Let us first compute the fixed points of~\eqref{eq:beta_error}.

\begin{proposition}
    The difference inclusion~\eqref{eq:beta_error} has a unique fixed point
    \begin{equation} \label{eq:beta_fix}
        \tilde{\beta}_\fix = \min\left\{0,\frac{g_2-g_0}{1-g_1}\right\} \le 0 \;.
    \end{equation}
\end{proposition}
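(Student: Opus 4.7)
The plan is to unfold the definition of a fixed point for the difference inclusion~\eqref{eq:beta_error} and perform a three-way case analysis on the sign of $\tilde\beta$, using the multivalued definition of $\sgn$. A fixed point is a value $\tilde\beta$ satisfying $\tilde\beta \in -g_0 + g_1 \tilde\beta - g_2 \sgn(\tilde\beta)$, or equivalently
\begin{equation*}
    (1-g_1)\tilde\beta \in -g_0 - g_2\sgn(\tilde\beta)\;.
\end{equation*}
Since $g_0,g_2 > 0$ and $g_1 \in (0,1)$ by~\eqref{eq:gs}, the coefficient $1-g_1$ is strictly positive, so this inclusion can be solved for $\tilde\beta$ in each case.

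First I would rule out strictly positive fixed points: if $\tilde\beta > 0$, then $\sgn(\tilde\beta)=\{1\}$ and the equation forces $(1-g_1)\tilde\beta = -g_0-g_2 < 0$, which is impossible. Next I would solve the case $\tilde\beta < 0$: here $\sgn(\tilde\beta)=\{-1\}$ yields the unique candidate $\tilde\beta = (g_2-g_0)/(1-g_1)$, which is admissible precisely when $g_2 < g_0$. Finally, the case $\tilde\beta = 0$ uses the set-valuedness of $\sgn$ at the origin: one needs $0 \in -g_0 - g_2[-1,1] = [-g_0-g_2,\,g_2-g_0]$, which holds if and only if $g_2 \ge g_0$.

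The two nontrivial cases are mutually exclusive and exhaustive, giving existence and uniqueness. It remains to check that the formula~\eqref{eq:beta_fix} captures both: when $g_2 \ge g_0$ the quotient $(g_2-g_0)/(1-g_1)$ is nonnegative, so the minimum with $0$ picks $\tilde\beta_\fix = 0$, matching case three; when $g_2 < g_0$ the quotient is strictly negative, so the minimum picks $\tilde\beta_\fix = (g_2-g_0)/(1-g_1)$, matching case two. In either case $\tilde\beta_\fix \le 0$, completing the claim.

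No step looks like a real obstacle; the only subtlety worth flagging is the necessity of reading~\eqref{eq:beta_error} as a genuine inclusion, since $\tilde\beta_\fix = 0$ is a fixed point only in the set-valued sense, i.e., $0$ belongs to the image set $-g_0 - g_2\sgn(0) = [-g_0-g_2,\,g_2-g_0]$, even though it is not a zero of the ``selected'' single-valued map on either side of the origin. This is exactly the regime where the adaptation signal chatters between $+1$ and $-1$ and a sliding-mode-like equilibrium at $\tilde\beta = 0$ becomes possible.
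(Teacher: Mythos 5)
Your proof is correct, but it takes a genuinely different route from the paper's. You unfold the inclusion $(1-g_1)\tilde\beta \in -g_0 - g_2\sgn(\tilde\beta)$ and run an elementary three-way case analysis on the sign of $\tilde\beta$, verifying directly that a positive fixed point is impossible, that a negative one exists exactly when $g_2 < g_0$, and that $\tilde\beta = 0$ is a fixed point in the set-valued sense exactly when $g_2 \ge g_0$; mutual exclusivity of the two admissible cases gives uniqueness. The paper instead rewrites the fixed-point condition as a resolvent equation $-g_0/(1-g_1) \in \bigl(I_\rd + \tfrac{g_2}{1-g_1}\sgn\bigr)(\tilde\beta_\fix)$, invokes the general fact that the resolvent of a maximally monotone operator is single-valued and everywhere defined to obtain existence and uniqueness in one stroke, and then reads off the answer from the explicit soft-thresholding formula $(I_\rd + \epsilon\sgn)^{-1}(x) = x - \min\{|x|,\epsilon\}\sgn(x)$. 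The two arguments are computationally equivalent---the case analysis you perform is exactly what is hidden inside the derivation of the soft-thresholding formula---but they buy different things: the paper's argument generalizes immediately to any maximally monotone nonlinearity in place of $\sgn$ and packages uniqueness as a citation, while yours is self-contained, requires no convex-analysis background, and makes explicit the sliding-mode interpretation of the fixed point at the origin (a point the paper leaves implicit). Your closing remark that $\tilde\beta_\fix = 0$ is a fixed point only because $0$ lies in the image set $[-g_0-g_2,\,g_2-g_0]$ of the multivalued map is exactly the right subtlety to flag, and it is consistent with the paper's later treatment of the $g_2 > g_0$ regime in Proposition~\ref{prop:behav_2}, where that equilibrium is unstable and only an ultimate bound is obtained.
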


\begin{proof}
    The fixed points $\tilde{\beta}_\fix$ of~\eqref{eq:beta_error} are those satisfying $\tilde{\beta}_{\fix} \in -g_0 + g_1\tilde{\beta}_{\fix} - g_2 \sgn(\tilde{\beta}_\fix)$, that is,
    \begin{displaymath}
        -\frac{g_0}{1 - g_1} \in \tilde{\beta}_\fix + \frac{g_2}{1 - g_1} \sgn(\tilde{\beta}_\fix) \;.
    \end{displaymath}
    Since the multivalued operator $\sgn$ is maximally monotone, the resolvent $(I_\rd + \epsilon \sgn)^{-1}$ with $\epsilon > 0$ is single-valued and defined in all $\RR$ (see, e.g.,~\cite{bauschke}). Indeed, we have
    \begin{displaymath}
	   (I_\rd + \epsilon \sgn)^{-1}(x) = x - \min\left\{ |x|, \epsilon \right\} \sgn(x) \;.
    \end{displaymath}
    Thus, there is a unique fixed point
    \begin{displaymath}
        \tilde{\beta}_\fix = -\frac{g_0}{1-g_1} + \min\left\{ \frac{g_0}{1-g_1}, \frac{g_2}{1-g_1} \right\} \;,
    \end{displaymath}
    which readily simplifies to~\eqref{eq:beta_fix}.
\end{proof}

There are two qualitatively different asymptotic behaviors of~\eqref{eq:beta_error}. The first one is characterized by the following.
\begin{proposition} \label{prop:behav_1}
    Let $g_0 \geq g_2 > 0$ and $g_1 \in (0,1)$. The fixed point $\tilde{\beta}_\fix = -(g_0-g_2)/(1-g_1)$ of the difference inclusion~\eqref{eq:beta_error} is globally asymptotically stable.
\end{proposition}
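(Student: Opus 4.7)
The plan is to exploit the piecewise-affine structure of the inclusion on the two half-lines $\tilde{\beta} > 0$ and $\tilde{\beta} < 0$, on each of which the dynamics collapses to a scalar affine contraction with rate $g_1 \in (0,1)$ and its own ``virtual'' fixed point. The strategy proceeds in three steps: (i) show that the non-positive half-line $\{\tilde{\beta} \le 0\}$ is forward invariant under the inclusion; (ii) show that every trajectory starting from $\tilde{\beta}_0 > 0$ enters this set in a finite, explicitly bounded number of steps; and (iii) show that once inside, the dynamics is purely linear and contracts exponentially to $\tilde{\beta}_\fix$.

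For step (i), if $\tilde{\beta}_k < 0$ the inclusion is single-valued and yields $\tilde{\beta}_{k+1} = g_1 \tilde{\beta}_k + (g_2 - g_0)$; since $g_1 \tilde{\beta}_k < 0$ and $g_2 - g_0 \le 0$ by hypothesis, the image is negative. If $\tilde{\beta}_k = 0$, every admissible image lies in $[-g_0 - g_2,\, -g_0 + g_2] \subseteq (-\infty, 0]$, again using $g_0 \ge g_2$. For step (ii), the restriction of the inclusion to $\tilde{\beta}_k > 0$ is the affine map $\tilde{\beta}_{k+1} = g_1 \tilde{\beta}_k - (g_0 + g_2)$; as long as the iterates remain positive, the closed-form solution $\tilde{\beta}_k = g_1^k \tilde{\beta}_0 - (g_0 + g_2)\frac{1 - g_1^k}{1 - g_1}$ applies. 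Because the corresponding ``virtual'' fixed point $-(g_0 + g_2)/(1 - g_1)$ is strictly negative and $g_1^k \to 0$, this expression becomes non-positive after a finite number of steps that can be bounded in terms of $\log \tilde{\beta}_0$ and the data. For step (iii), substituting $\tilde{\beta}_\fix = (g_2 - g_0)/(1 - g_1)$ into the affine recursion on $\tilde{\beta} < 0$ gives the tidy error relation $\tilde{\beta}_{k+1} - \tilde{\beta}_\fix = g_1 (\tilde{\beta}_k - \tilde{\beta}_\fix)$, hence $|\tilde{\beta}_k - \tilde{\beta}_\fix| = g_1^k |\tilde{\beta}_0 - \tilde{\beta}_\fix|$. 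Combining the three steps produces global attractivity; Lyapunov stability near $\tilde{\beta}_\fix$ follows (in the generic case $g_0 > g_2$, so that $\tilde{\beta}_\fix < 0$) because a sufficiently small neighborhood of $\tilde{\beta}_\fix$ is contained strictly in the negative half-line, on which the recursion is a linear contraction towards $\tilde{\beta}_\fix$.

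The only step requiring genuine care—rather than a serious obstacle—is the interface $\tilde{\beta} = 0$ together with the boundary case $g_0 = g_2$, where $\tilde{\beta}_\fix = 0$. There, a small positive perturbation produces a single-step jump of order $g_0 + g_2$ into the negative half-line before the linear contraction begins, so Lyapunov stability must be phrased through a $\mathcal{KL}$-bound or, equivalently, through a piecewise Lyapunov function such as $V(\tilde{\beta}) = |\tilde{\beta} - \tilde{\beta}_\fix|$ on $\{\tilde{\beta} \le 0\}$ extended by a suitably large affine branch on $\{\tilde{\beta} > 0\}$ that absorbs the one-step transient while still being non-increasing along every selection of the inclusion. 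With this packaging, the exponential decay from step (iii) propagates to all initial conditions and delivers the global asymptotic stability claim.
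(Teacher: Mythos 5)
Your argument is correct and follows essentially the same route as the paper's proof: both split $\RR$ at the origin, observe that on the negative half-line the inclusion collapses to the affine contraction $\tilde{\beta}_{k+1} = g_1\tilde{\beta}_k + g_2 - g_0 = g_1(\tilde{\beta}_k - \tilde{\beta}_\fix) + \tilde{\beta}_\fix$, establish forward invariance of that half-line, and show that trajectories starting at $\tilde{\beta}_0 \ge 0$ must enter it. The paper packages this as a single Lyapunov function $V_k = |\tilde{\beta}_k - \tilde{\beta}_\fix|$ whose decrease is verified on $\RR_-$, together with the estimate $\tilde{\beta}_{k+1} \le g_1 \tilde{\beta}_k$ for $\tilde{\beta}_k \notin \RR_-$; your closed-form iteration on the positive branch is, if anything, more explicit about why the crossing into $\RR_-$ occurs in finitely many steps (bare monotone decrease alone would not suffice, whereas the constant offset $-(g_0+g_2)$ does). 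For $g_0 > g_2$ your three steps deliver the claim completely.

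One caveat concerns the boundary case $g_0 = g_2$, where $\tilde{\beta}_\fix = 0$. You correctly identify the obstruction, but the proposed repair cannot work. An initial condition $\tilde{\beta}_0 = \epsilon > 0$ is mapped to $g_1\epsilon - (g_0+g_2) \approx -2g_0$, a fixed distance from the fixed point no matter how small $\epsilon$ is, and the same jump is available as a selection of the multivalued $\sgn$ at $\tilde{\beta}_k = 0$ itself. Consequently no $\mathcal{KL}$ bound holds (evaluate it at $k=1$ and let $\epsilon \to 0$), and no Lyapunov function that is continuous and positive definite at $0$ can be non-increasing along every selection: $\epsilon$--$\delta$ stability genuinely fails, and the fixed point is globally attractive but not Lyapunov stable in that case. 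This is a defect of the proposition as stated rather than of your argument specifically --- the paper's proof silently has the same gap, since its Lyapunov decrease is only checked on $\RR_-$ --- but the honest resolutions are to assume $g_0 > g_2$ strictly or to weaken the conclusion at $g_0 = g_2$ to global attractivity, not to enlarge the Lyapunov function on the positive half-line.
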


\begin{proof}
    Define the Lyapunov function $V_k = |\tilde{\beta}_k - \tilde{\beta}_\fix|$ and the difference $\Delta V_k =  V_{k+1} - V_k$. Let $\RR_{-}$ be the set of negative real numbers. Note that, for $\tilde{\beta}_k \in \RR_{-}$, we have
    \begin{equation} \label{eq:D_lyap_1}
        \tilde{\beta}_{k+1} = g_1 \tilde{\beta}_k + g_2 - g_0 = g_1 (\tilde{\beta}_k - \beta_\fix) + \beta_\fix \;.
    \end{equation}
    Thus, $\Delta V_k = g_1 |\tilde{\beta}_k - \tilde{\beta}_\fix| - |\tilde{\beta}_k - \tilde{\beta}_\fix| < 0$. This proves the asymptotic stability of $\tilde{\beta}_\fix$ and the invariance of the open interval $(2\tilde{\beta}_\fix,0)$.

    Note that, by~\eqref{eq:D_lyap_1}, $\tilde{\beta}_k < \tilde{\beta}_\fix$ implies that $\tilde{\beta}_{k+1} < \tilde{\beta}_\fix$, so the interval $(-\infty,\tilde{\beta}_\fix)$ is also invariant. Since $\RR_- = (-\infty,\tilde{\beta}_\fix)\cup (2\tilde{\beta}_\fix,0)$, $\RR_-$ is invariant as well. This shows that a basin of attraction of the origin is $\RR_-$.

    Note that, since $g_0 \ge g_2$, we have
    \begin{displaymath}
        \max\{g_1 \tilde{\beta}_k - g_2\sgn(\tilde{\beta}_k) - g_0\} \le \max\{g_1 \tilde{\beta}_k - g_2(1 + \sgn(\tilde{\beta}_k))\} \;
    \end{displaymath}
    so $\tilde{\beta}_{k+1} \le g_1 \tilde{\beta}_k$. Thus, $\tilde{\beta}_{k} \not\in \RR_-$ implies that $\tilde{\beta}_{k+1} < \tilde{\beta}_{k}$, which shows that $\RR_-$ is also attractive and, hence, the asymptotic stability of the origin is global.
\end{proof}

The other asymptotic behavior is characterized next.

\begin{proposition} \label{prop:behav_2}
    Let $g_2 > g_0 > 0$ and $g_1 \in (0,1)$. The solutions of the difference inclusion~\eqref{eq:beta_error} satisfy the ultimate bound
    \begin{displaymath}
        \limsup_{k \to \infty} |\tilde{\beta}_k| \le \frac{g_0+g_2}{1+g_1} \;.
    \end{displaymath}
\end{proposition}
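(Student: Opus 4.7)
My plan is to use the nonnegative quantity $V_k := |\tilde\beta_k|$ as a Lyapunov-type function and study its evolution through the recursion~\eqref{eq:beta_error}. Since $\sgn$ is single-valued away from the origin, the inclusion splits into two affine branches: $\tilde\beta_{k+1} = g_1\tilde\beta_k - (g_0+g_2)$ when $\tilde\beta_k > 0$, and $\tilde\beta_{k+1} = g_1\tilde\beta_k + (g_2-g_0)$ when $\tilde\beta_k < 0$ (at $\tilde\beta_k=0$ the image interval $[-(g_0+g_2),g_2-g_0]$ contains $0$, consistent with $\tilde\beta_{\fix}=0$). Taking absolute values and distinguishing whether or not the sign of $\tilde\beta$ flips between steps $k$ and $k+1$, a routine case analysis yields the central one-step inequality
\[
    V_{k+1} \le (g_0+g_2) - g_1 V_k
\]
in the sign-flip regime, and the strict contraction $V_{k+1} < g_1 V_k$ in the no-flip regime (which can only arise when $V_k > (g_0+g_2)/g_1$ for $\tilde\beta_k>0$, respectively $V_k>(g_2-g_0)/g_1$ for $\tilde\beta_k<0$).

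Next, I would argue that the no-flip regime is only transient: the strict contraction $V_{k+1} < g_1 V_k$ with $g_1 \in (0,1)$ drives trajectories, in finitely many steps, into the forward-invariant trapping set $\{V \le (g_0+g_2)/g_1\}$, where the sign-flip inequality above is active at every subsequent step. Inside this set, the evolution of $V_k$ is dominated by the affine map $\Phi(V) := (g_0+g_2) - g_1 V$, whose unique positive fixed point is precisely $V^\star := (g_0+g_2)/(1+g_1)$ and which satisfies $|\Phi(V)-V^\star| = g_1|V - V^\star|$; i.e., $\Phi$ contracts toward $V^\star$ with rate $g_1$.

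To close the argument, I would compare $V_k$ with the scalar iteration $\bar V_{k+1}=\Phi(\bar V_k)$ started above $V_0$ and use the affine contraction to conclude $\limsup_k V_k \le V^\star$, which is the stated ultimate bound. The main obstacle I foresee is that $\Phi$ is affine \emph{decreasing} in $V$, so the one-step inequality $V_{k+1} \le \Phi(V_k)$ does not immediately lift to $V_{k+1} \le V^\star$: small $V_k$ produces large $\Phi(V_k)$, breaking a naive monotone iteration. I would handle this by passing to the two-step recursion, in which the sign-flip inequality composes with itself to yield a genuine \emph{increasing} contraction of slope $g_1^2 < 1$ around the same fixed point $V^\star$, so that standard monotone-iteration and $\limsup/\liminf$ comparison arguments apply and deliver the claimed bound.
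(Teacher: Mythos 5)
Your one-step estimates are correct and, after repackaging, they are exactly the paper's own four-case computation: both arguments amount to showing that $V_k=|\tilde\beta_k|$ strictly decreases whenever $|\tilde\beta_k|>\frac{g_0+g_2}{1+g_1}$. The genuine gap is in your closing step, and you have in fact put your finger on it yourself: $\Phi(V)=(g_0+g_2)-g_1V$ is \emph{decreasing}, so the one-sided relation $V_{k+1}\le\Phi(V_k)$ cannot be iterated. Your proposed repair --- passing to the two-step recursion --- does not escape this, because the composition fails for exactly the same reason: applying the decreasing map $\Phi$ to both sides of $V_{k+1}\le\Phi(V_k)$ reverses the inequality, giving $\Phi(V_{k+1})\ge\Phi(\Phi(V_k))$, so $V_{k+2}\le\Phi(V_{k+1})$ yields no upper bound in terms of $\Phi^{2}(V_k)$. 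The comparison with the increasing contraction $\Phi^{2}$ therefore never gets off the ground.

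More importantly, no repair can succeed, because the stated bound is false on part of the parameter range. Whenever $g_2(1-g_1)>g_0(1+g_1)$, the inclusion~\eqref{eq:beta_error} admits the period-two solution
\[
\tilde\beta^{+}=\frac{g_2}{1+g_1}-\frac{g_0}{1-g_1}>0\,,\qquad
\tilde\beta^{-}=-\frac{g_2}{1+g_1}-\frac{g_0}{1-g_1}<0
\]
(one checks directly that $g_1\tilde\beta^{+}-(g_0+g_2)=\tilde\beta^{-}$ and $g_1\tilde\beta^{-}+(g_2-g_0)=\tilde\beta^{+}$), and it is exponentially stable since the two affine branches compose to a map of slope $g_1^{2}<1$. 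Along it, $\limsup_{k}|\tilde\beta_k|=|\tilde\beta^{-}|=\frac{g_2}{1+g_1}+\frac{g_0}{1-g_1}$, which strictly exceeds $\frac{g_0+g_2}{1+g_1}$ because $g_0>0$ and $1-g_1<1+g_1$. Concretely, $g_1=1/2$, $g_0=1/10$, $g_2=1$ gives the orbit $7/15\mapsto-13/15\mapsto7/15$ with $\limsup=13/15$, while the claimed bound is $11/15$. The underlying issue is that in discrete time ``$\Delta V_k<0$ outside a sublevel set'' does not make that sublevel set invariant: a point with $0<\tilde\beta_k\ll1$ overshoots to $|\tilde\beta_{k+1}|\approx g_0+g_2$. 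The paper's proof stops at the decrease condition and silently assumes invariance, so it carries the same gap; your write-up is more candid in flagging the obstruction, but the two-step fix does not close it, and the constant in the statement would have to be enlarged (to at least $\frac{g_2}{1+g_1}+\frac{g_0}{1-g_1}$ in the regime above) before any proof could go through.
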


\begin{proof}
    Consider the Lyapunov function $V_k = |\tilde{\beta}_k|$. We will show that $\Delta V_k < 0$ whenever $|\tilde{\beta}_k| > \frac{g_0+g_2}{1+g_1}$. We will consider four cases.

    The first case is $\frac{g_0+g_2}{1+g_1} < \tilde{\beta}_k \le \frac{g_0+g_2}{g_1}$. We have $\tilde{\beta}_{k+1} = g_1 \tilde{\beta}_k - g_2 - g_0 \le 0$, so $\Delta V_{k} = -g_1 \tilde{\beta}_k + g_0 + g_2 - \tilde{\beta}_k < 0$.  
    The second case is $\frac{g_0+g_2}{g_1} < \tilde{\beta}_k$. We have $\tilde{\beta}_k > 0$ and $\Delta V_{k} = g_1 \tilde{\beta}_k - (g_0 + g_2) - \tilde{\beta}_k < 0$. Thus, $\frac{g_0+g_2}{1+g_1} < \tilde{\beta}_k$ implies that $\Delta V_k < 0$. The third case is $-\frac{g_2-g_0}{g_1} \le \tilde{\beta}_{k} < -\frac{g_2-g_0}{1+g_1}$. We have $\tilde{\beta}_{k+1} = g_1 \tilde{\beta}_k + g_2 - g_0 \ge 0$ and $\Delta V_{k} = g_1 \tilde{\beta}_{k} + g_2 - g_0 + \tilde{\beta}_{k} < 0$. The last case is $\tilde{\beta}_{k} < -\frac{g_2-g_0}{g_1}$. We have $\tilde{\beta}_{k} < 0$ and $\Delta V_{k} = -(g_1 \tilde{\beta}_{k} + g_2 - g_0) + \tilde{\beta}_{k} < 0$. 
    Finally, note that $\tilde{\beta}_{k} < -\frac{g_0+g_2}{1+g_1}$ implies that $\Delta V_k < 0$.
\end{proof}

\begin{figure}
    \centering
    \subfigure[Full dynamics. Error~\eqref{eq:ultimate_A}, computed by simulation of the nonlinear system~\eqref{eq:nonlinear} in feedback loop with the control~\eqref{eq:u} and the adaptive unit~\eqref{eq:low_pass}.]{\includegraphics[width=\columnwidth]{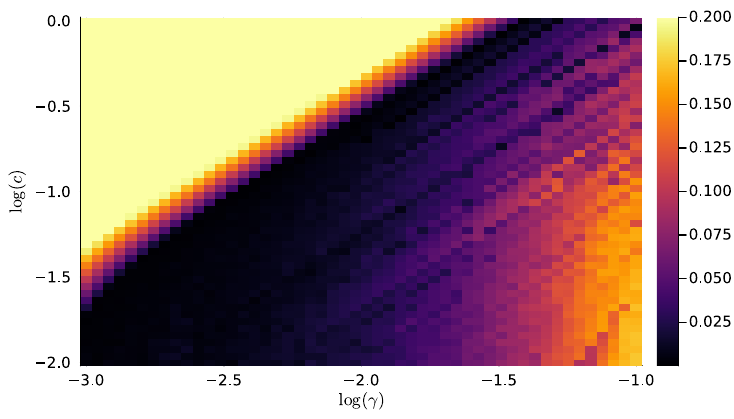}}
    \subfigure[Slow dynamics. Error~\eqref{eq:ultimate_A_beta}, computed using Propositions~\ref{prop:behav_1} and~\ref{prop:behav_2}.]{\includegraphics[width=\columnwidth]{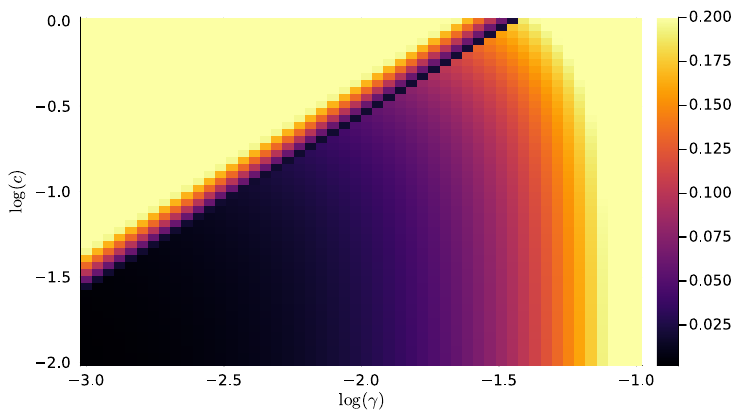}}
    \caption{Amplitude error. Comparison between the full and slow dynamics. Plant parameters $\lambda = 15$, $\xi = 0.1$, and $\omega_\rn = 8$, and adaptive parameters $\gamma \in [0.01, 1]$ and $c \in [0.01, 1]$ (in logarithmic scale).}
    \label{fig:heat_map}
\end{figure}

To illustrate the accuracy of the slow model, we simulate the full nonlinear system~\eqref{eq:nonlinear} with the control~\eqref{eq:u} and the adaptive unit~\eqref{eq:low_pass}. Using the simulated data, we compute the ultimate amplitude error
\begin{equation} \label{eq:ultimate_A}
    \limsup_{t \to \infty} |A(t) - A^\star|
\end{equation}
and compare it to the value predicted by~\eqref{eq:amp_of_beta},
\begin{equation} \label{eq:ultimate_A_beta}
    \limsup_{k \to \infty} |\hat{A}(\beta_k) - A^\star| \;.
\end{equation}
The ultimate bounds on $\beta_k$ in~\eqref{eq:ultimate_A_beta} are obtained using Propositions~\ref{prop:behav_1} and~\ref{prop:behav_2}. Figure~\ref{fig:heat_map} shows a comparison of~\eqref{eq:ultimate_A} and~\eqref{eq:ultimate_A_beta} for several values of $c$ and $\gamma$. The slow model reproduces the full model with reasonable accuracy, provided $\gamma$ is not too large.

We can see from Propositions~\ref{prop:behav_1} and~\ref{prop:behav_2} that~\eqref{eq:beta_error} undergoes a bifurcation at $g_2 = g_0$. By the definition of $g_0$ and $g_2$, we can see that the bifurcation occurs approximately at
\begin{displaymath}
    \gamma^\star = (e^{c\frac{\pi}{2\omega^\star}} - e^{-c\frac{\pi}{2\omega^\star}}) \beta^\star \;.
\end{displaymath}
This discrete-time bifurcation resembles that of a continuous-time Hopf bifurcation in that there is a transition from a stable equilibrium to an unstable equilibrium around which oscillations occur (see, e.g.,~\cite{guckenheimer}). 

\begin{figure}
    \centering
    \includegraphics[width=\columnwidth]{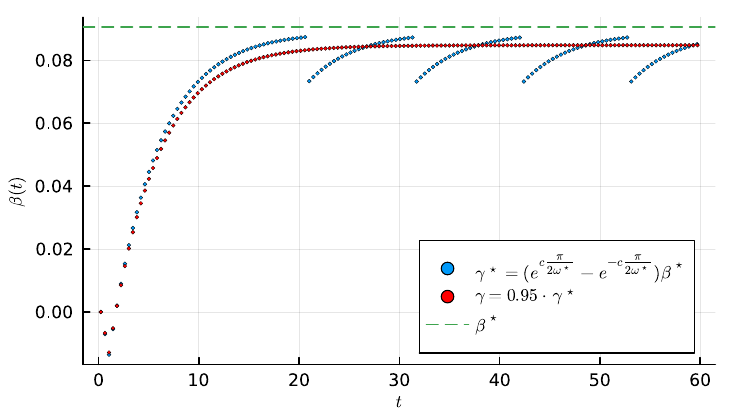}
    \caption{Adaptation dynamics obtained via full model simulation with $A^\star = 0.5$, $\beta^\star = 0.0915$, and values of $\gamma$ close to its bifurcation value $\gamma^\star = 0.0075$.} 
    \label{fig:theta_bifurc}
\end{figure}


\subsection{Parameter Optimization}

Our final objective is to choose $c$ and $\gamma$ such that the error $\tilde{\beta}_k$ remains as small as possible. According to Proposition~\ref{prop:behav_1}, we have the steady-state error
\begin{displaymath}
    J_\rs(\gamma,\beta^\star,c) =  \frac{g_0-g_2}{1-g_1} =
      \beta^\star - \gamma\frac{e^{-c\frac{\pi}{2\omega^\star}}}{1 - e^{-c\frac{\pi}{\omega^\star}}}
\end{displaymath}
when $g_2 \le g_0$, i.e., when $\tilde{\beta}_\fix$ is stable. According to Proposition~\ref{prop:behav_2}, the error is ultimately bounded by
\begin{displaymath}
    J_\ru(\gamma,\beta^\star,c) = \frac{g_0+g_2}{1+g_1} =
     \frac{(1 - e^{-c\frac{\pi}{\omega^\star}})\beta^\star + \gamma e^{-c\frac{\pi}{2\omega^\star}}}{1 + e^{-c\frac{\pi}{\omega^\star}}}
\end{displaymath}
when $g_2 > g_0$, i.e., when $\tilde{\beta}_\fix$ is unstable. For fixed $c > 0$, we propose the performance index
\begin{equation} \label{eq:J} 
    J(\gamma,\beta^\star,c) = 
    \begin{cases}
        J_\rs(\gamma,\beta^\star,c) & \text{if $\beta^\star \ge \gamma\frac{e^{-c\frac{\pi}{2\omega^\star}}}{1 - e^{-c\frac{\pi}{\omega^\star}}}$} \\
        J_\ru(\gamma,\beta^\star,c) & \text{if $\beta^\star < \gamma\frac{e^{-c\frac{\pi}{2\omega^\star}}}{1 - e^{-c\frac{\pi}{\omega^\star}}}$}
    \end{cases}
\end{equation}
and the robust optimization problem
\begin{equation} \label{eq:optim}
    \min_{\gamma > 0} \sup_{\beta^\star \in [\underline{\beta}, \overline{\beta}]} J(\gamma,\beta^\star,c) \;,
\end{equation}
where $\underline{\beta}$ and $\overline{\beta}$ are, respectively, lower and upper bounds on the (uncertain) burst width $\beta^\star$.

\begin{proposition} \label{prop:optimization}
    Consider the optimization problem~\eqref{eq:optim} with $J$ as in~\eqref{eq:J} and $g_i$, $i = 0,1,2$, as in~\eqref{eq:gs}. The solution is 
    \begin{displaymath}
        J(\gamma_\opt,\overline{\beta},c) = \min_{\gamma > 0} \sup_{\beta^\star \in [\underline{\beta}, \overline{\beta}]} J(\gamma,\beta^\star,c)
    \end{displaymath}
    with
    \begin{equation} \label{eq:gamma_opt}
        \gamma_{\opt} = \max\left\{\overline{\beta}\frac{1 - e^{-c\frac{2\pi}{\omega^\star}}}{3e^{-c\frac{\pi}{2\omega^\star}} - e^{-c\frac{3\pi}{2\omega^\star}}},\underline{\beta}\frac{1 - e^{-c\frac{\pi}{\omega^\star}}}{e^{-c\frac{\pi}{2\omega^\star}}}\right\} \;.
    \end{equation}
\end{proposition}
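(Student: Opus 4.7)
The plan is to reduce the minimax to a one-dimensional optimization by computing the inner supremum $\phi(\gamma):=\sup_{\beta^\star\in[\underline{\beta},\overline{\beta}]} J(\gamma,\beta^\star,c)$ explicitly and then minimizing the resulting piecewise function of $\gamma$. Setting $a := e^{-c\pi/(2\omega^\star)} \in (0,1)$, the coefficients in~\eqref{eq:gs} become $g_0 = (1-a^2)\beta^\star$, $g_1 = a^2$, $g_2 = \gamma a$, and the switching criterion in~\eqref{eq:J} reduces to a single threshold $K(\gamma) := \gamma a/(1-a^2)$: for $\beta^\star \ge K(\gamma)$ one has $J = J_\rs = \beta^\star - K(\gamma)$, while for $\beta^\star < K(\gamma)$ one has $J = J_\ru = [(1-a^2)\beta^\star + \gamma a]/(1+a^2)$. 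Both branches are affine and strictly increasing in $\beta^\star$, and at the transition $\beta^\star = K(\gamma)$ the value drops from $\lim_{\beta^\star \to K^-} J_\ru = 2\gamma a/(1+a^2)$ down to $J_\rs(K) = 0$.

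Next, I would split the analysis of $\phi(\gamma)$ according to the position of $K(\gamma)$ relative to $[\underline{\beta},\overline{\beta}]$. Introducing the breakpoints $\gamma_1 := \underline{\beta}(1-a^2)/a$ and $\gamma_2 := \overline{\beta}(1-a^2)/a$, so that $K(\gamma_1)=\underline{\beta}$ and $K(\gamma_2)=\overline{\beta}$, the inner sup admits the closed form
\begin{displaymath}
\phi(\gamma) = \begin{cases}
\overline{\beta} - K(\gamma) & \text{if } \gamma \le \gamma_1, \\
\max\bigl\{2\gamma a/(1+a^2),\, \overline{\beta} - K(\gamma)\bigr\} & \text{if } \gamma_1 < \gamma < \gamma_2, \\
\bigl[(1-a^2)\overline{\beta} + \gamma a\bigr]/(1+a^2) & \text{if } \gamma \ge \gamma_2.
\end{cases}
\end{displaymath}
The left piece is strictly decreasing, the right piece strictly increasing, and the middle piece is the pointwise maximum of an increasing and a decreasing affine function of $\gamma$, hence unimodal with a unique interior minimizer $\gamma^*_C$ at which the two affines coincide. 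Solving the equality yields $\gamma^*_C = \overline{\beta}(1-a^4)/[a(3-a^2)]$, which rewritten in exponentials is exactly the first entry of the $\max$ in~\eqref{eq:gamma_opt}, and a short inequality shows $\gamma^*_C < \gamma_2$ always holds.

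To conclude, I would compare $\gamma^*_C$ with $\gamma_1$. In the large-uncertainty regime $\gamma^*_C \ge \gamma_1$ --- equivalently $\overline{\beta}(1+a^2) \ge \underline{\beta}(3-a^2)$ --- the middle piece attains its minimum value $2\overline{\beta}(1-a^2)/(3-a^2)$ at $\gamma^*_C$, and a direct check shows this is no larger than the left-piece endpoint $\phi(\gamma_1) = \overline{\beta} - \underline{\beta}$, so $\gamma_{\opt} = \gamma^*_C$. In the small-uncertainty regime $\gamma^*_C < \gamma_1$, the middle piece is monotone increasing from the jump value $2\underline{\beta}(1-a^2)/(1+a^2) > \overline{\beta}-\underline{\beta}$ at $\gamma_1^+$, so combined with the decreasing left piece the global minimum is attained at the closed endpoint $\gamma = \gamma_1$, giving $\gamma_{\opt} = \gamma_1$. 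Both cases are encoded by $\gamma_{\opt} = \max\{\gamma^*_C, \gamma_1\}$, which is~\eqref{eq:gamma_opt}. The main obstacle I foresee is the careful bookkeeping around the jump of $\phi$ at $\gamma_1$: the global minimum may sit on the closed Case B boundary rather than in the Case C interior, and the $\max$ form of~\eqref{eq:gamma_opt} is precisely what absorbs this dichotomy into a single expression.
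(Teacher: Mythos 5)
Your proof is correct and follows essentially the same route as the paper's: compute the inner supremum by cases on where the switching threshold $\gamma e^{-c\pi/(2\omega^\star)}/(1-e^{-c\pi/\omega^\star})$ falls relative to $[\underline{\beta},\overline{\beta}]$, locate the crossing point of the increasing and decreasing branches, and compare it with $\underline{\gamma}$ to obtain the $\max$ in~\eqref{eq:gamma_opt}. The only difference is that you make explicit the endpoint comparisons (e.g., that the middle-piece minimum does not exceed $\overline{\beta}-\underline{\beta}$, and the behavior of the jump at $\gamma_1$) that the paper states without detailed verification.
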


\begin{proof}
    Fist, we consider $\gamma$ and $c$ fixed, and solve the optimization problem $\sup_{\beta^\star \in [\underline{\beta}, \overline{\beta}]} J(\gamma,\beta^\star,c)$. We consider three exhaustive cases:
    \begin{enumerate}[i)]
        \item $\overline{\beta} < \gamma e^{-c\frac{\pi}{2\omega^\star}}/(1 - e^{-c\frac{\pi}{\omega^\star}})$. This implies that $\sup_{\beta^\star \in [\underline{\beta}, \overline{\beta}]} J(\gamma,\beta^\star,c) = J_\ru(\gamma,\overline{\beta},c)$.
        \item $\gamma e^{-c\frac{\pi}{2\omega^\star}}/(1 - e^{-c\frac{\pi}{\omega^\star}}) \in (\underline{\beta}, \overline{\beta}]$. This implies that
        \begin{displaymath}
            \sup_{\beta^\star \in [\underline{\beta}, \overline{\beta}]} J(\gamma,\beta^\star,c) = \max\left\{J_\ru\left(\gamma,\gamma\tfrac{e^{-c\frac{\pi}{2\omega^\star}}}{1 - e^{-c\frac{\pi}{\omega^\star}}},c\right), J_\rs(\gamma,\overline{\beta},c)\right\} \;,
        \end{displaymath}
        \item $\gamma e^{-c\frac{\pi}{2\omega^\star}}/(1 - e^{-c\frac{\pi}{\omega^\star}}) \le \underline{\beta}$. This implies that $\sup_{\beta^\star \in [\underline{\beta}, \overline{\beta}]} J(\gamma,\beta^\star,c) = J_\rs(\gamma,\overline{\beta},c)$.
    \end{enumerate}
    Now we minimize the cost over $\gamma$. Let
    \begin{displaymath}
        \underline{\gamma} = \underline{\beta}\frac{1 - e^{-c\frac{\pi}{\omega^\star}}}{e^{-c\frac{\pi}{2\omega^\star}}} \quad \text{and} \quad
        \overline{\gamma} = \overline{\beta}\frac{1 - e^{-c\frac{\pi}{\omega^\star}}}{e^{-c\frac{\pi}{2\omega^\star}}} \;.
    \end{displaymath}
    Since $J_\ru\left(\cdot,\overline{\beta},c\right)$ and $J_\rs\left(\cdot,\overline{\beta},c\right)$ are monotonically increasing and decreasing, respectively, we know that the minimum in~\eqref{eq:optim} corresponds to case ii),
    \begin{displaymath}
        \min_{\gamma > 0} \sup_{\beta^\star \in [\underline{\beta}, \overline{\beta}]} J(\gamma,\beta^\star,c) = \min_{\gamma \in [\underline{\gamma}, \overline{\gamma}]}\max\left\{J_\ru\left(\gamma,\gamma\tfrac{e^{-c\frac{\pi}{2\omega^\star}}}{1 - e^{-c\frac{\pi}{\omega^\star}}},c\right), J_\rs(\gamma,\overline{\beta},c)\right\} \;.
    \end{displaymath}
    The operands of the $\max$ operation coincide at $\gamma = \gamma^\star$ with
    \begin{displaymath}
        \gamma^\star \frac{2e^{-c\frac{\pi}{2\omega^\star}}}{1 + e^{-c\frac{\pi}{\omega^\star}}} = \overline{\beta} - \gamma^\star\frac{e^{-c\frac{\pi}{2\omega^\star}}}{1 - e^{-c\frac{\pi}{\omega^\star}}} \;,
    \end{displaymath}
    that is, with
    \begin{displaymath}
        \gamma^\star = \overline{\beta} \frac{1 - e^{-c\frac{2\pi}{\omega^\star}}}{3e^{-c\frac{\pi}{2\omega^\star}} - e^{-c\frac{3\pi}{2\omega^\star}}} \;.
    \end{displaymath}
    The minimum lies at $\gamma^\star$ whenever $\gamma^\star \in [\underline{\gamma},\overline{\gamma}]$.
    It is not difficult to verify that we always have $\gamma^\star \le \overline{\gamma}$, but we do not always have $\underline{\gamma} \le \gamma^\star$. The inequality $\underline{\gamma} > \gamma^\star$ presents itself when, e.g., $\underline{\beta}$ is close enough to $\overline{\beta}$, in which case the minimum is located at $\underline{\gamma}$. This is summarized in~\eqref{eq:gamma_opt}.
\end{proof}

Allow us to summarize by providing a step-by-step algorithm to tune the neuromorphic controller:
\begin{enumerate}
    \item Choose the desired amplitude value $A^\star$.
    \item Compute the value of the required pulse width, $\beta^\star$, using the solution of the HB equation with respect to $\omega^\star$ and $\beta^\star$ with fixed $A^\star$.
    \item Estimate the bounds of $\beta^\star$: $\beta^\star \in [\underline{\beta}, \overline{\beta}]$. The bounds $\underline{\beta}$ and $\overline{\beta}$ can be estimated for known bounded parameter uncertainties $\Delta\lambda, \Delta\xi$, and $\Delta\omega_\rn$. 
    \item Fix the time constant $c$ and compute the optimal gain~\eqref{eq:gamma_opt}.
\end{enumerate}

\begin{figure}[thpb]
    \centering
    \includegraphics[width=\columnwidth]{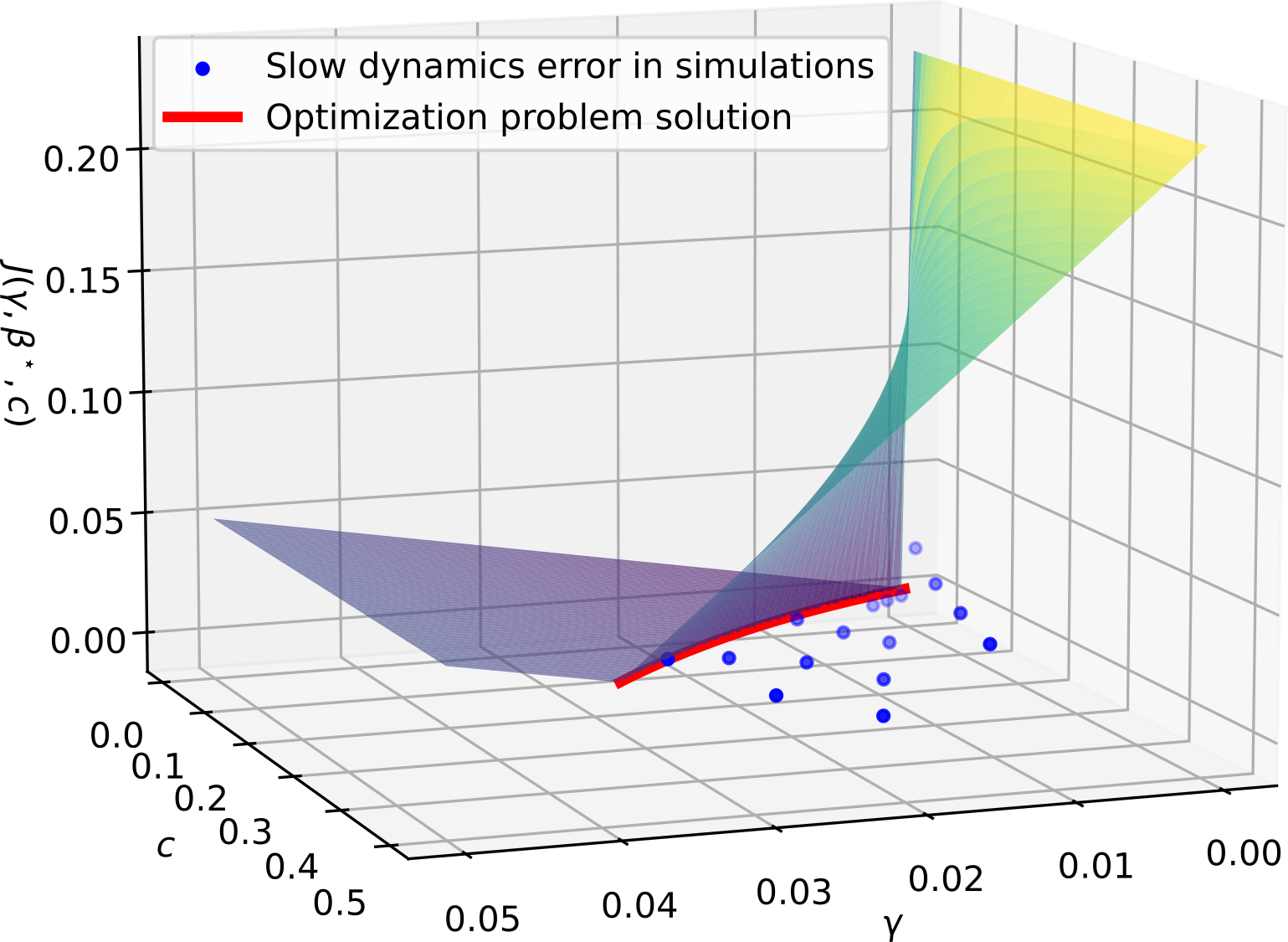}
    \caption{Slow dynamics error $J(\gamma, \beta^\star, c)$. Surface generated by $\sup_{\beta^\star \in [\underline{\beta}, \overline{\beta}]} J(\gamma,\beta^\star,c)$, where $\underline{\beta} = 0.8\cdot\beta^\star = 0.0732$ and $\overline{\beta} = 2.5\cdot\beta^\star = 0.2288$, and burst width error (blue dots) obtained via simulations closed-loop system~\eqref{eq:u},~\eqref{eq:beta_causal}, and~\eqref{eq:nonlinear} for $\beta \in [\underline{\beta}, \overline{\beta}]$ and $\gamma_{\opt}$ computed from~\eqref{eq:gamma_opt}.} 
    \label{fig:beta_surface}
\end{figure}


\section{CONCLUSIONS AND FUTURE WORKS}

We introduced a general modeling architecture for neuromorphic control systems that aligns with the event-based nature of neuromorphic hardware. By adopting an input-output perspective, we formulated control problems that respect neuromorphic sensing, computing, and actuation constraints, while remaining amenable to classical control-theoretical tools. Applied to the neuromorphic pendulum, our framework enabled the derivation of rigorous performance bounds and revealed a discrete-time Hopf-like bifurcation in the closed-loop system. Interestingly, robust optimization of the control error led to an optimal solution that coincides with the bifurcation point, uncovering a meaningful trade-off between performance and robustness.

Future works include: reporting the experimental results that confirm Propositions~\ref{prop:behav_1},~\ref{prop:behav_2}, and~\ref{prop:optimization}, formulating the separation principle rigorously, and providing general guidelines for choosing $g_\ra$ and $\alpha_\ra$.



\section{ACKNOWLEDGMENTS}

We would like to thank Romain Postoyan, Maurice Heemels, and Elena Petri for discussions on the complementary
paper~\cite{petri2025}. Both papers are submitted to the same invited session on neuromorphic systems and control. We also wish to acknowledge Prof. Leonid Fridman for the many helpful discussions.


\bibliographystyle{IEEEtran}
\bibliography{bibfile}

\end{document}